\newtheorem{thm}{Theorem}
\newtheorem{prop}{Proposition}
\newtheorem{corollary}{Corollary}
\newtheorem{rem}{Remark}
\theoremstyle{definition}
\title{Contraction-based Observers using non-Euclidean Norms\\
with an Application to Traffic Networks}
\author{Samuel Coogan\thanks{S. Coogan (\texttt{sam.coogan@gatech.edu}) is with the School of Electrical and Computer Engineering and the School of Civil and Environmental Engineering at the Georgia Institute of Technology, Atlanta, GA.}\ and Murat Arcak\thanks{M. Arcak (\texttt{\texttt{arcak@eecs.berkeley.edu}}) is with the Department of Electrical Engineering and Computer Sciences at the University of California, Berkeley, CA. } }
\renewcommand\footnotemark{}
\date{}
\begin{document}
\maketitle
\begin{abstract}
  In this note, we study Luenberger-type full-state observers for nonlinear systems using contraction theory. We show that if the matrix measure of a suitably defined Jacobian matrix constructed from the dynamics of the system-observer interconnection is uniformly negative, then the state estimate converges exponentially to the actual state. This sufficient condition for convergence establishes that the distance between the estimate and state is infinitesimally contracting with respect to some norm on the state-space. In contrast to existing results for contraction-based observer design, we allow for contraction with respect to non-Euclidean norms. Such norms have proven useful in applications. To demonstrate our results, we study the problem of observing vehicular traffic density along a freeway modeled as interconnected, spatially homogenous compartments, and our approach relies on establishing contraction of the system-observer interconnection with respect to the one-norm.
\end{abstract}

\section{Introduction}
A dynamical system is \emph{contractive} if any two solutions exponentially converge to one another \cite{LOHMILLER:1998bf, Sontag:2010fk, Forni:2012qe}. The importance of contraction theory in observer design has been previously noted in \cite{LOHMILLER:1998bf, Aghannan:2003wj, Bonnabel:2010fx, Sanfelice:2012uf, Dani:2015ng}. In \cite{LOHMILLER:1998bf}, it is shown how contraction theory is applied for control synthesis and observer design for several example systems. Contraction-based observer design has been proposed in \cite{Aghannan:2003wj, Bonnabel:2010fx} for a class of Lagrangian systems. The paper \cite{Sanfelice:2012uf} establishes necessary and sufficient conditions for an observer to ensure that the distance between system state and the observer estimate is nonincreasing, and \cite{Dani:2015ng} proposes observer design for stochastic systems. In all cases, the contractive properties are with respect to scaled Euclidean norms so that the resulting distance metric on the state space is Riemannian.  

In contrast, in this paper, we use the matrix measure as the basic tool for establishing contractive properties, which accommodates arbitrary vector norms. Thus, this appears to be the first work that proposes contraction-based observer design for non-Euclidean metrics. As seen in Section \ref{sec:an-example-traffic}, such metrics (such as those based on the one-norm and infinity-norm) are natural for certain applications such as traffic flow networks where the dynamics are modeled as a compartmental flow network.

\section{Problem Formulation}
Consider the dynamical system 
\begin{align}
  \label{eq:1}
\dot{x}&=f(t,x) \\
{y}&=g(t,x)
\end{align}
with state $x\in \mathcal{X} \subseteq \mathbb{R}^n$ for positively invariant and convex $\mathcal{X}$,  output $y\in\mathbb{R}^m$, and $t\in[0,\infty)$. We assume $f(t,x)$ and $g(t,x)$ are differentiable in $x$ and that $f(t,x)$ and $g(t,x)$, as well as the Jacobian matrices $\frac{\partial f}{\partial x}(t,x)$ and $\frac{\partial g}{\partial x}(t,x)$, are continuous in $(t,x)$. When $\mathcal{X}$ is not an open set, we understand this to mean that $f(t,x)$ and $g(t,x)$ can be extended to an open superset of $\mathcal{X}$ for which the above conditions hold.

For $L\in\mathbb{R}^{n\times m}$, consider the Luenberger-type observer given by
\begin{align}
  \label{eq:3}
  \dot{\hat{x}}=f(t,\hat{x})+L(g(t,\hat{x})-g(t,x)).
\end{align}
We assume that always $\hat{x}(t)\in \mathcal{X}$ for all $t\in[0,\infty)$, that is, $\mathcal{X}\times \mathcal{X}$ is positively invariant for the system-observer interconnection. Our objective is to establish that $\lim_{t\to\infty}|x(t)-\hat{x}(t)|= 0$ for any solution $(x(t),\hat{x}(t))$ to the observer-system interconnection defined by \eqref{eq:1}--\eqref{eq:3} for some norm $|\cdot|$ on $\mathbb{R}^n$.

To that end, let $|\cdot|$ be a vector norm on $\mathbb{R}^n$, and let $\Vert\cdot\Vert$ denote its induced matrix norm. For $A\in\mathbb{R}^{n\times n}$, the \emph{matrix measure} of $A$, denoted $\mu(A)$, %
is given by \cite{Vidyasagar:2002ly, Desoer:2008bh}
\begin{align}
  \label{eq:26}
  \mu(A)= \lim_{h\to 0^+}\frac{1}{h}(\Vert I+hA\Vert-1).
\end{align}

The main result of this note, presented in Theorem \ref{prop:problem-formulation} below, establishes a sufficient condition for ensuring convergence of the estimator \eqref{eq:3}. In particular, if the matrix measure of a certain Jacobian matrix induced by the system-observer dynamics is uniformly negative on $\mathcal{X}$, then the dynamics satisfy a contractive property that ensures global exponential stability of the zero estimation error set $\{(x,\hat{x})\in\mathcal{X}\times \mathcal{X}:x=\hat{x}\}$.

\begin{thm}
\label{prop:problem-formulation}
  Let $|\cdot|$ be a norm on $\mathbb{R}^n$ and $\mu(\cdot)$ its associated matrix measure. Consider the system-observer interconnection given by \eqref{eq:1}--\eqref{eq:3}. If
  \begin{align}
    \label{eq:20}
 \mu\left(\frac{\partial f}{\partial x}(t,x)+L\frac{\partial g}{\partial x}(t,x)\right)\leq c   
  \end{align}
for some $c\in\mathbb{R}$ for all $x\in\mathcal{X}$ and $t\geq 0$, then 
  \begin{align}
    \label{eq:19}
    |x(t)-\hat{x}(t)|\leq e^{ct}|x(0)-\hat{x}(0)|
  \end{align}
for any solution $(x(t),\hat{x}(t))$ of the system-observer interconnection. In particular, if the above condition holds for $c<0$, then the zero estimation error set $\{(x,\hat{x})\in\mathcal{X}\times \mathcal{X}:x=\hat{x}\}$ is globally exponentially stable.

\end{thm}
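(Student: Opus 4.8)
The plan is to study the estimation error $e(t) := x(t)-\hat{x}(t)$ and show that the scalar quantity $|e(t)|$ obeys a differential inequality dictated by the matrix measure, after which a comparison principle yields \eqref{eq:19}. First I would write the error dynamics: subtracting \eqref{eq:3} from \eqref{eq:1} and using $L\bigl(g(t,\hat{x})-g(t,x)\bigr)=-L\bigl(g(t,x)-g(t,\hat{x})\bigr)$ gives
\[
\dot{e}=f(t,x)-f(t,\hat{x})+L\bigl(g(t,x)-g(t,\hat{x})\bigr).
\]
Because $\mathcal{X}$ is convex and $x(t),\hat{x}(t)\in\mathcal{X}$, the segment $\hat{x}+se$ for $s\in[0,1]$ lies entirely in $\mathcal{X}$, so the fundamental theorem of calculus lets me express each difference as a line integral of its Jacobian. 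Collecting terms yields $\dot{e}=A(t)e$, where
\[
A(t)=\int_0^1\left(\frac{\partial f}{\partial x}(t,\hat{x}+se)+L\frac{\partial g}{\partial x}(t,\hat{x}+se)\right)ds
\]
is exactly the average over the segment of the Jacobian appearing in \eqref{eq:20}.

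Next I would bound $\mu(A(t))$. By \eqref{eq:20}, the integrand has matrix measure at most $c$ at every point $\hat{x}+se\in\mathcal{X}$. Since $\mu(\cdot)$ is subadditive and positively homogeneous, it is convex, so Jensen's inequality applied to the probability measure $ds$ on $[0,1]$ (equivalently, a Riemann-sum limit exploiting continuity of $\mu$ along the trajectory) gives $\mu(A(t))\le\int_0^1\mu(\cdot)\,ds\le c$.

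Then I would relate the growth of $|e(t)|$ to $\mu(A(t))$. Writing $e(t+h)=(I+hA(t))e(t)+o(h)$ and using submultiplicativity of the induced matrix norm gives $|e(t+h)|\le\Vert I+hA(t)\Vert\,|e(t)|+o(h)$; dividing by $h$ and passing to the limit as in \eqref{eq:26} shows that the upper Dini derivative satisfies
\[
D^+|e(t)|\le\mu(A(t))\,|e(t)|\le c\,|e(t)|.
\]
A comparison principle for Dini derivatives then yields $|e(t)|\le e^{ct}|e(0)|$, which is \eqref{eq:19}; when $c<0$ this is exponential convergence of $e$ to $0$, i.e.\ global exponential stability of the zero estimation error set.

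The main obstacle I anticipate is purely that the norm is non-Euclidean and hence generally nondifferentiable, so $|e(t)|$ cannot be differentiated directly and the whole argument must be carried out with the upper Dini derivative. Making the chain $D^+|e(t)|\le\mu(A(t))|e(t)|\le c|e(t)|$ rigorous—together with invoking a comparison lemma valid for Dini (rather than classical) derivatives—is the delicate part, and the convexity/Jensen bound on $\mu(A(t))$ is the second point requiring care. The FTC representation and the sign bookkeeping in the error dynamics are routine by comparison.
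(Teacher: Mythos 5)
Your proposal is correct, and it reaches \eqref{eq:19} by a genuinely different route than the paper. The paper never forms the averaged Jacobian $A(t)$: instead it sets $V(x,\hat{x})=|x-\hat{x}|$, observes that the error velocity $f(t,x)-f(t,\hat{x})+L(g(t,x)-g(t,\hat{x}))$ is exactly $F(t,x)-F(t,\hat{x})$ for the auxiliary vector field $F(t,\xi)=f(t,\xi)+Lg(t,\xi)$, and then bounds the \emph{Clarke} generalized derivative of $V$ by $cV$ using the fact that the auxiliary system $\dot{\xi}=F(t,\xi)$ is contractive (Proposition \ref{prop:1}, imported from Sontag, and Corollaries \ref{cor:1}--\ref{cor:2}, which convert the trajectory-level contraction estimate into the pointwise Clarke-derivative bound). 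You instead work directly on the error: the mean-value/FTC representation $\dot{e}=A(t)e$ with $A(t)=\int_0^1\bigl(\frac{\partial f}{\partial x}+L\frac{\partial g}{\partial x}\bigr)(t,\hat{x}+se)\,ds$, the subadditivity/Jensen bound $\mu(A(t))\le c$, and the classical Coppel-type estimate $D^+|e(t)|\le\mu(A(t))\,|e(t)|$ obtained from $|e(t+h)|\le\Vert I+hA(t)\Vert\,|e(t)|+o(h)$ and definition \eqref{eq:26}. Both arguments must confront the nondifferentiability of a general norm; the paper does so with the Clarke derivative and a $\limsup$ over points where the norm is differentiable, while your submultiplicativity trick sidesteps differentiating the norm entirely and only needs the upper Dini derivative, for which the cited Comparison Lemma is already stated. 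Your route is more self-contained and elementary (it does not rely on the external contraction theorem or the Clarke-calculus corollaries), at the cost of having to justify $\mu\bigl(\int_0^1 M(s)\,ds\bigr)\le\int_0^1\mu(M(s))\,ds$, which is standard via Riemann sums and the Lipschitz continuity $|\mu(A)-\mu(B)|\le\Vert A-B\Vert$; the paper's route, by contrast, packages the work into reusable lemmas about contractive systems that it also needs for the traffic example.
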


The proof of Theorem \ref{prop:problem-formulation} uses the \emph{Clarke generalized derivative} \cite{Clarke:1990kx} to bound the time derivative of a suitably defined Lyapunov function that is not differentiable everywhere. Basic definitions and facts regarding the Clarke generalized derivative are contained in Appendix~\ref{sec:clarke-gener-deriv}. The main idea of the proof is to show that an auxiliary system induced by the system-observer interconnection is a contractive system; definitions and fundamental results for contractive systems are contained in Appendix~\ref{sec:contractive-systems}. Convergence of the observer follows from the contractive properties of this auxiliary system.

\begin{proof}[Proof of Theorem \ref{prop:problem-formulation}]
Define $V(x,\hat{x})=|x-\hat{x}|$. Let $S\subset \mathbb{R}^n$ be the measure-zero set where $|\cdot|$ is not differentiable, and for $z\not\in S$, define $d(z):=\frac{\partial}{\partial z}|z|$. We claim, for all $x,\hat{x}\in\mathcal{X}$ and for almost all $t$,
  \begin{align}
    \label{eq:8}
\dot{V}(t,x,\hat{x})\leq V^o_t(x,\hat{x};\dot{x},\dot{\hat{x}})\leq cV(x,\hat{x}),
  \end{align}
where
\begin{align}
  \label{eq:21}
V^o_t(x,\hat{x};\dot{x},\dot{\hat{x}})=   \limsup_{(z,\hat{z})\to (x,\hat{x}), z-\hat{z}\not\in S}d(z-\hat{z})(f(t,x)-f(t,\hat{x})+L(g(t,x)-g(t,\hat{x}))),
\end{align}
that is, $V^o_t(x,\hat{x};\dot{x},\dot{\hat{x}})$ is the Clarke derivative of $V$ at $(x,\hat{x})$ in the direction of $(\dot{x},\dot{\hat{x}})$ at time $t$.

Assuming the claim to be true, we then have
\begin{align}
  \label{eq:24}
  \frac{d}{dt}|x(t)-\hat{x}(t)|\leq c |x(t)-\hat{x}(t)|
\end{align}
for almost all $t$ for any solution $(x(t),\hat{x}(t))$ of the system-observer interconnection. Then \eqref{eq:19} follows from the Comparison Lemma \cite [Lemma 3.4, p. 102]{khalil}.

We now prove each inequality of the claim \eqref{eq:8}. When $\dot{V}(t,x,\hat{x})$ exists, it is given by
\begin{align}
  \label{eq:22}
  \dot{V}(t,x,\hat{x})=\frac{\partial V}{\partial x}\dot{x}+\frac{\partial V}{\partial \hat{x}}\dot{\hat{x}}= d(x-\hat{x})(f(t,x)-f(t,\hat{x})-L(g(t,\hat{x})-g(t,x))),
\end{align}
and thus the first inequality of \eqref{eq:8} holds.

For the second inequality, consider the auxiliary system $\dot{\xi}=f(t,\xi)+Lg(t,\xi)=:F(t,\xi)$. By hypothesis, $\mu(F(t,\xi))\leq c$ for all $t\geq 0$ and all $\xi\in\mathcal{X}$. Then, by Corollary \ref{cor:2} in the Appendix,
\begin{align}
  \label{eq:25}
&\limsup_{(z_1,z_2)\to (\xi_1, \xi_2), z_1-z_2\not\in S}d(z_1-z_2)(f(t,\xi_1)+Lg(t,\xi_1)-f(t,\xi_2)-Lg(t,\xi_2))\leq c |\xi_1-\xi_2|
\end{align}
for any $\xi_1$, $\xi_2\in\mathcal{X}$ and $t\geq 0$.  But this is exactly what must be shown to establish the second inequality of \eqref{eq:8}, and thus the claim holds, completing the proof.
\end{proof}

\section{An Example of Traffic Flow Estimation}
\label{sec:an-example-traffic}

Consider a linear freeway with $n$ segments such that traffic flows from segment $i$ to $i+1$, $x_i\in[0,\bar{x}_i]$ is the density of vehicles occupying link $i$, and $\bar{x}_i$ is the capacity of link $i$. An example network is shown in Figure \ref{fig:traffic}. The state-space is then $\mathcal{X}=\prod_{i=1}^n[0,\bar{x}_i]$.

 Associated with each link is a continuously differentiable \emph{demand} function $D_i:[0,\bar{x}_i]\to\mathbb{R}_{\geq 0}$ that is strictly increasing and satisfies $D_i(0)=0$, and a continuously differentiable \emph{supply} function $S_i:[0,\bar{x}_i]\to\mathbb{R}_{\geq 0}$ that is strictly decreasing and satisfies $S_i(\bar{x}_i)=0$. We assume $D'_i(x_i)\geq \nu$  and $S'(x_i)\leq -\omega$ for some $\nu>0$ and some $\omega>0$ for all $i$ and all $x_i\in[0,\bar{x}_i]$ where $'$ denotes differentiation for functions of a scalar argument.

A fraction $\beta_i\in[0,1)$ of the demand on link $i$ is assumed to exit the network via, \emph{e.g.}, unmodeled offramps\footnote{By assuming that the offramp flow is not restricted by downstream capacity, we adopt a \emph{non-first-in-first-out} junction model appropriate if, \emph{e.g.}, dedicated exit lanes exist for the offramp traffic. See \cite{Coogan:2016rp, Kurzhanskiy:2015fj, Lovisari:2014qv} for a discussion of FIFO and non-FIFO assumptions in traffic networks.}. The remaining $1-\beta_i$ fraction of demand is destined for link $i+1$.

We assume $\delta_1(t)$ vehicles per unit of time are available to flow into link $1$ at time $t$ where $\delta(t)\geq 0$ for all $t$. Flow from segment to segment is restricted by upstream demand and downstream supply, and the change in density of a link is governed by mass conservation:
\begin{align}
  \label{eq:38}
\dot{x}_1&= \min\{\delta_1(t),S_1(x_1)\}-p_{1}(x_{1},x_{2})-\beta_1D_1(x_1)\\
  \dot{x}_i&= p_{i-1}(x_{i-1},x_i)-p_{i}(x_{i},x_{i+1})-\beta_iD_i(x_i), \quad i=2,\ldots,n-1\\
  \label{eq:38-3}\dot{x}_n&=p_{n-1}(x_{n-1},x_n)- D_n(x_n)
\end{align}
where, for $i=1,\ldots,n-1$,
\begin{align}
  \label{eq:39}
p_{i}(x_{i},x_{i+1})=\min\{(1-\beta_i)D_i(x_i),S_{i+1}(x_{i+1})\}.
\end{align}
Compactly, we have $\dot{x}=f(t,x)$ where $f$ is defined in accordance with \eqref{eq:38}--\eqref{eq:39}.  It is straightforward to establish positive invariance of the hyper-rectangle $\mathcal{X}$ by verifying that $x_i=0$ implies $\dot{x}_i\geq 0$ and also $x_i=\bar{x}_i$ implies $\dot{x}_i\leq 0$ for all $i=1,\ldots, n$.

Our objective is to determine an observation function $g(x)$ to ensure global exponential convergence of the zero estimation error set for the  the corresponding observer-system interconnection.

\begin{rem}
 Due to the $\min\{\cdot\}$ in  \eqref{eq:38} and \eqref{eq:39}, $f(t,x)$ is not differentiable in $x$. However, $f(t,x)$ is continuous and is \emph{piecewise} differentiable in $x$ \cite{Scholtes:2012fk}. In particular, we interpret $f(t,x)$ as selecting at each $x$ from a finite number of continuously differentiable vector fields (this finite selection at each $x$ is governed by the active minimizers in \eqref{eq:38} and \eqref{eq:39}).  The main results above remain valid in this case by considering the Jacobian matrix induced by each vector field in this finite collection.
\end{rem}

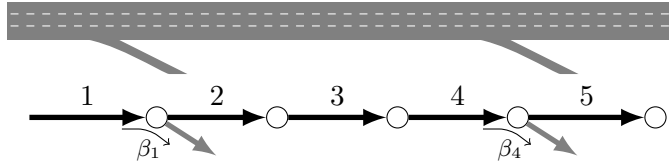
\begin{figure}
  \centering
  \begin{tikzpicture}[scale=.08,yscale=-1,xscale=-1]
\fill[gray] (0,30) rectangle (110,36.2);
\draw[dashed, white] (0,32) -- (110,32);
\draw[dashed, white] (0,34) -- (110,34);
\fill[gray] (27,36) -- (15,42) -- (18,42) .. controls +(1,0) and (28,36) .. (34,36);
\fill[gray] (92,36) -- (80,42) -- (83,42) .. controls +(1,0) and (93,36) .. (99,36);
  \end{tikzpicture}

\tikzstyle{link}=[line width=2pt, ->,>=latex]
\tikzstyle{junc}=[draw,circle,inner sep=1pt,minimum width=8pt]
\tikzstyle{onramp}=[line width=2pt, dashed,->,>=latex]
      \begin{tikzpicture}[xscale=.9]
\node (j0) at (0,0) {};
\node[junc] (j1) at ($(j0)+(0:.8in)$) {};
\node[junc] (j2) at ($(j1)+(0:.7in)$) {};
\node[junc] (j3) at ($(j2)+(0:.7in)$) {};
\node[junc] (j4) at ($(j3)+(0:.7in)$) {};
\node[junc] (j5) at ($(j4)+(0:.8in)$) {};
\draw[link] (j0) to node[above]{$1$} (j1);
\draw[link] (j1) to node[above]{$2$} (j2);
\draw[link] (j2) to node[above]{$3$} (j3);
\draw[link] (j3) to node[above]{$4$} (j4);
\draw[link] (j4) to node[above]{$5$} (j5);
\draw[link,gray] (j1) to ($(j1)+(330:.4in)$);
\draw[link,gray] (j4) to ($(j4)+(330:.4in)$);
\draw[rounded corners, ->] ($(j1.225)+(-.4,-.05)$) -- ($(j1.225) +(0,-.05)$)  node[below,pos=1] {\footnotesize $\beta_1$}-- ($(j1.225)+(320:.4)$) ;
\draw[rounded corners, ->] ($(j4.225)+(-.4,-.05)$) -- ($(j4.225) +(0,-.05)$)  node[below,pos=1] {\footnotesize $\beta_4$}-- ($(j4.225)+(320:.4)$) ;
\end{tikzpicture}
  \caption{\footnotesize A model for a linear freeway consisting of interconnected compartments. The flow of vehicles from compartment to compartment depends on the traffic density of the upstream and downstream links. At the end of segment $i$, a fraction of the vehicles exit the freeway via unmodeled exit ramps as determined by the parameter $\beta_i\geq 0$. If no vehicles exit for some links, $\beta_i=0$, as is the case for $i=2,3,5$ above.}
  \label{fig:traffic}
\end{figure}

Let $\partial_i$ denote differentiation with respect to the $i$th component of $x$. Notice that, for $i=1,\ldots,n-1$,
\begin{align}
  \label{eq:27}
\partial_{i}p_i(x_i,x_{i+1})\geq 0\text{\quad and \quad }\partial_{i+1}p_i(x_i,x_{i+1})\leq 0.   
\end{align}
Define $p_0(t,x_1):=\min\{\delta_1(t),S_1(x_1)\} $. 
The Jacobian of $f(t,x)$, where it exists, is given by 
\begin{align}
\small
  \label{eq:36}
\nonumber  J(t,x):=\frac{\partial f}{\partial x}(t,x)&=
  \underbrace{\begin{pmatrix}
    \partial_1p_0-\partial_1p_1&-\partial_2p_1&0&0&\cdots&0\\
\partial_1 p_1&\partial_2p_1-\partial_2p_2&-\partial_3 p_2&0 &\cdots&0\\
0&\partial_2p_2&\partial_3p_2-\partial_3p_3&-\partial_4p_3&&0\\
\vdots&&&&\ddots&\vdots\\
0&0&\cdots&0&\partial_{n-1}p_{n-1}&\partial_{n}p_{n-1}
  \end{pmatrix}}_{\normalsize =:J_1(t,x)}\\
&+
\underbrace{  \begin{pmatrix}
    -\beta_1D'_1(x_1)&0&0&\cdots&0\\
0&  -\beta_2D'_2(x_2)&&&0\\
0&0&-\beta_3D'_3(x_3)&&0\\
\vdots&&&\ddots&\vdots\\
0&0&0&\cdots&-D'_n(x_n)
  \end{pmatrix}}_{=:J_2(t,x)}.
\end{align}

Let $|\cdot|_1$ denote the vector one-norm, \emph{i.e.}, $|x|_1=\sum_{i=1}^n|x_i|$ for $x\in\mathbb{R}^n$. For $A\in\mathbb{R}^{n\times n}$, the corresponding matrix measure $\mu_1(\cdot)$ is given by \cite{Vidyasagar:2002ly, Desoer:2008bh}
\begin{align}
  \label{eq:28}
  \mu_1(A)=\max_{j=1,\ldots,n}\left([A]_{jj}+\sum_{i\neq j} |[A]_{ij}|\right)
\end{align}
where $[A]_{ij}$ denotes the $ij$-th element of $A$. When all off-diagonal entries of $A$  are nonnegative (such matrices are said to be \emph{Metzler}), \eqref{eq:28} reduces to the maximum column sum of $A$. It is straightforward to confirm using \eqref{eq:27} that $J(t,x)$ is Metzler for all $t\geq 0$ and all $x\in\mathcal{X}$ so that
\begin{align}
  \label{eq:29}
  \mu_1(J(t,x))=\max_{j=1,\ldots,n}\left\{\sum_{i=1}^n[J_1 (t,x)]_{ij}+\sum_{i=1}^n[J_2 (t,x)]_{ij}\right\}.
\end{align}
From \eqref{eq:36}, we see that, for all $t\geq 0$ and all $x\in\mathcal{X}$,
\begin{align}
  \label{eq:30}
  \sum_{i=1}^n[J_1 (t,x)]_{ij}\leq 0\qquad \text{and}\qquad \sum_{i=1}^n[J_2 (t,x)]_{ij}\leq -\beta_j v
\end{align}
for $j=1,\ldots,n$ where we take $\beta_n:=1$. If each $\beta_j>0$, then $  \mu(J(t,x))\leq -\min_{j}\{\beta_j\} v<0$, that is, the dynamics are contractive and the ``observer'' $\dot{\hat{x}}=f(t,\hat{x})$ ensures exponential convergence of the state estimate. 

When $\beta_j=0$ for some $j$, however, we must introduce an observation function to ensure convergence of the observer. Let $J=\{j:\beta_j=0\}$ be the links for which $\beta_j=0$, that is, no traffic exits the network via link $j$.
We propose to construct a differentiable observation function $g_j(x_j)$ for each $j\in J$. For example, $g_j(x_j)$ might be the number of vehicles visible from a roadside video camera for which a portion of link $j$ is visible. If $p_j\in[0,1]$ is the fraction of link $j$ visible in the camera's frame, then a reasonable approximation might be $g_j(x_j)=p_j x_j$, or $g_j$ might be obtained by fitting measured data. We only impose that there exists $\alpha_j>0$ such that $g'(x_j)\geq \alpha_j$ for all $x_j\in [0,\bar{x}_j]$ for all $j$.

We then construct the observation function $g(x)$ elementwise from $g_j$:
\begin{align}
  \label{eq:31}
g&:\mathcal{X}\to \mathbb{R}^n,\\
  [g]_j(x)&=
  \begin{cases}
    g_j(x_j)&\text{if $j\in J$}\\
    0&\text{else}
  \end{cases}
\end{align}
where $[g]_j(x)$ denotes the $j$-th element of $g(x)$. Finally, we choose $L=-I$ where $I$ is the identity matrix. We then have 
\begin{align}
  \label{eq:32}
\sum_{i=1}^n\left[\frac{\partial g}{\partial x}(x)\right]_{ij}=
  \begin{cases}
    g'_j(x_j) &\text{if $j\in J$}\\
    0&\text{else}    
  \end{cases}
\end{align}
for all $x\in\mathcal{X}$. It follows that
\begin{align}
  \label{eq:33}
        \mu_1\left(J(t,x)+L\frac{\partial g}{\partial x}(x)\right)\leq -\max\left\{v\min_{j\not\in J}\beta_j,\min_{j\in J}\alpha_j\right\}<0
\end{align}
for all $t\geq 0$ and all $x\in \mathcal{X}$. Furthermore, we verify that $\mathcal{X}\times\mathcal{X}$ is positively invariant for the observer-system interconnection as follows. Recall that $\mathcal{X}$ is positively invariant for the system, so we need only verify $\hat{x}(t)\in\mathcal{X}$ for all $t$ for all solutions of the state estimate $\hat{x}$. To that end, suppose $\hat{x}_i(t)=0$ for some $i$ at some time $t$. Then, as previously noted, $f_i(t,\hat{x})\geq 0$ where $f_i$ is the $i$-th element of $f$. Further, since $g_i$ is increasing and $x_i\geq \hat{x}_i$, we have $g_i(\hat{x}_i)-g_i(x _i)\leq 0$. Therefore, $\dot{\hat{x}}_i(t)=f_i(t,\hat{x})-(g_i(\hat{x}_i)-g_i(x _i))\geq 0$. An analogous argument implies that if $\hat{x}_i(t)=\bar{x}_i$ for some $t$, then $\dot{\hat{x}}_i(t)\leq 0$ so that $\mathcal{X}\times\mathcal{X}$ is positively invariant for the observer-system interconnection.

The above results can be summarized as follows. For a link $j$ with $\beta_j>0$, some vehicles exit the network upon leaving the link, and this dissipation ensures that the estimated density on this link converges to the actual density. For a link $j$ with $\beta_j=0$, one approach is to measure (a function of) the density on such links. By introducing negative feedback of the difference between this measurement and the estimated measurement, convergence of the observer is guaranteed, and this guarantee is provided via a contraction theoretic argument.

\appendix
\section{Appendix}
\subsection{Clarke Generalized Derivative}
\label{sec:clarke-gener-deriv}
Let $\mathcal{X}\subseteq \mathbb{R}^n$ and $G: \mathcal{X}\to \mathbb{R}$ be a locally Lipschitz continuous function. For any $x\in \mathcal{X}$ and $v\in\mathbb{R}^n$, the \emph{Clarke generalized directional derivative} or \emph{Clarke derivative} of $G$ at $x$ in the direction of $v$ is given by \cite{Clarke:1990kx}
\begin{align}
  \label{eq:17}
  G^\circ(x;v)=\limsup_{z\to x,h\to 0^+}\frac{G(z+hv)-G(z)}{h}.
\end{align}
Let $S\subset \mathcal{X}$ be the measure-zero set on which $G(x)$ is not differentiable. A basic fact of the Clarke derivative is that \cite[Corollary, p. 64]{Clarke:1990kx}
\begin{align}
  \label{eq:18}
  G^\circ(x;v)=\limsup_{z\to x, z\not\in S}\frac{\partial G}{\partial x}(z)v.
\end{align}

\subsection{Contractive Systems}
\label{sec:contractive-systems}
Here, we collect some fundamental results for dynamical systems for which the matrix measure of the Jacobian matrix of the system's vector field is uniformly bounded. In the case when this bound is negative, the system is \emph{contractive}, that is, the distance between any two trajectories is exponentially decreasing.
\begin{prop}
\label{prop:1}
  Consider
  \begin{align}
    \label{eq:4}
    \dot{x}=F(t,x)
  \end{align}
with $t\in[0,\infty)$ and $x\in\mathcal{X}\subset \mathbb{R}^n$ for positively invariant and convex $\mathcal{X}$, and assume $F(t,x)$ is differentiable in $x$, and $F(t,x)$ and $\frac{\partial F}{\partial x}(t,x)$ are continuous in $(t,x)$.   Let $|\cdot|$ be a norm on $\mathbb{R}^n$ and $\mu(\cdot)$ its associated matrix measure. If, for some $c\in\mathbb{R}$,
\begin{align}
  \label{eq:7}
  \mu\left(\frac{\partial F}{\partial x}(t,x)\right)\leq c,\quad \forall x\in \mathcal{X}\quad \forall t\geq 0,
\end{align}
then for any $x_1(t)$ and $x_2(t)$ solutions of \eqref{eq:4} and for all $t\geq t_0 \geq 0$,
\begin{align}
  \label{eq:9}
  |x_1(t)-x_2(t)|\leq e^{c(t-t_0)}|x_1(t_0)-x_2(t_0)|.
\end{align}
\begin{proof}
  A proof is found in \cite [proof of Theorem 1]{Sontag:2010fk}. Note that Theorem 1 of \cite{Sontag:2010fk} assumes $c<0$, but the result and proof technique remains valid for all $c\in\mathbb{R}$.
\end{proof}

\end{prop}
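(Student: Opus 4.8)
The plan is to reduce the two-trajectory comparison to a single linear time-varying system and then to relate the matrix measure of its system matrix to the one-sided derivative of the norm of the difference. First I would set $\psi(t):=x_1(t)-x_2(t)$, so that $\dot{\psi}=F(t,x_1)-F(t,x_2)$. Because $\mathcal{X}$ is convex, the segment $s\mapsto x_2(t)+s\psi(t)$, $s\in[0,1]$, remains in $\mathcal{X}$, and the fundamental theorem of calculus gives $F(t,x_1)-F(t,x_2)=A(t)\psi(t)$, where $A(t):=\int_0^1 \frac{\partial F}{\partial x}\bigl(t,x_2(t)+s\psi(t)\bigr)\,ds$. Thus $\psi$ obeys the linear system $\dot{\psi}=A(t)\psi$, and the differentiability of $F$ in $x$ together with continuity of the Jacobian ensures $A(t)$ is well-defined and that the remainder estimates used below are valid.

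Second, I would establish the Coppel-type pointwise bound $\mu(A(t))\leq c$ for all $t$. This follows from the hypothesis \eqref{eq:7} together with the standard facts that the matrix measure is subadditive, $\mu(A+B)\leq\mu(A)+\mu(B)$, and positively homogeneous, $\mu(\lambda A)=\lambda\mu(A)$ for $\lambda\geq 0$; these two properties make $\mu$ a convex functional, so Jensen's inequality applied to the integral average yields $\mu(A(t))\leq \int_0^1 \mu\bigl(\frac{\partial F}{\partial x}(t,x_2(t)+s\psi(t))\bigr)\,ds\leq c$, where the last step uses that each integrand is bounded by $c$ on $\mathcal{X}$.

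The crux, and the step I expect to be the main obstacle, is showing that the upper right Dini derivative of $t\mapsto|\psi(t)|$ satisfies $D^+|\psi(t)|\leq \mu(A(t))\,|\psi(t)|$. Writing $\psi(t+h)=(I+hA(t))\psi(t)+o(h)$ and applying the triangle inequality and submultiplicativity of the induced norm gives $|\psi(t+h)|\leq \Vert I+hA(t)\Vert\,|\psi(t)|+o(h)$; subtracting $|\psi(t)|$, dividing by $h>0$, and taking $\limsup_{h\to0^+}$ produces the inequality directly from the definition \eqref{eq:26} of the matrix measure. The care required here is in controlling the $o(h)$ term uniformly and, as in the body of the paper, in handling the points where $|\cdot|$ fails to be differentiable, which is precisely why the one-sided Dini (equivalently, Clarke) derivative is the right object rather than an ordinary derivative. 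Combining this bound with the estimate of the second step yields $D^+|\psi(t)|\leq c\,|\psi(t)|$ for all $t\geq t_0$, and the Comparison Lemma \cite[Lemma 3.4, p.~102]{khalil} then gives $|\psi(t)|\leq e^{c(t-t_0)}|\psi(t_0)|$, which is exactly \eqref{eq:9}.
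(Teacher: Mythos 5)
Your argument is correct and complete. Note that the paper itself gives no proof of Proposition \ref{prop:1}---it defers entirely to the proof of Theorem 1 in \cite{Sontag:2010fk}---so what you have written is a self-contained substitute, and it takes a genuinely different reduction step from the cited proof. Sontag parametrizes the segment of \emph{initial conditions} $s\mapsto s\,x_1(t_0)+(1-s)\,x_2(t_0)$, differentiates the flow with respect to $s$ to obtain the variational equation $\dot{w}=\frac{\partial F}{\partial x}(t,\varphi(t,s))\,w$, applies the Coppel-type bound to each fiber, and then integrates $|w(t,s)|$ over $s\in[0,1]$ to recover the distance between the two trajectories. You instead form, at each time $t$, the averaged Jacobian $A(t)=\int_0^1\frac{\partial F}{\partial x}\bigl(t,x_2(t)+s\psi(t)\bigr)\,ds$ along the chord joining the two \emph{current} states, so that $\psi=x_1-x_2$ itself satisfies a linear time-varying equation, and you push the hypothesis \eqref{eq:7} through the integral using sublinearity of $\mu$. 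Both reductions exploit convexity of $\mathcal{X}$ in the same essential way and both terminate in the same Dini-derivative estimate followed by the Comparison Lemma; yours avoids differentiability of the flow in its initial condition at the cost of needing $\mu\bigl(\int M\bigr)\leq\int\mu(M)$, which, as you note, is immediate from subadditivity and positive homogeneity. One small tightening: no uniform control of the $o(h)$ term is required, since for each fixed $t$ the solution $\psi$ is differentiable with $\dot{\psi}(t)=A(t)\psi(t)$, so $|\psi(t+h)|\leq\Vert I+hA(t)\Vert\,|\psi(t)|+o(h)$ and the bound $D^+|\psi(t)|\leq\mu(A(t))\,|\psi(t)|$ follows directly from \eqref{eq:26}; likewise, nondifferentiability of $|\cdot|$ is never an issue here because the Dini derivative of $t\mapsto|\psi(t)|$ is taken in $t$, not in the state.
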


\begin{corollary}
\label{cor:1}
    Under the hypotheses of Proposition \ref{prop:1}, let $S\subset \mathbb{R}^n$ be the measure-zero set where $|\cdot|$ is not differentiable, and let $x_1(t)$ and $x_2(t)$ be solutions of \eqref{eq:4}. If $x_1(t_0)-x_2(t_0)\not \in S$ for some $t_0\geq 0$, then
    \begin{align}
      \label{eq:15}
      \frac{d}{dt}(|x_1(t)-x_2(t)|) \big |_{t=t_0}\leq c |x_1(t_0)-x_2(t_0)|.
    \end{align}
\end{corollary}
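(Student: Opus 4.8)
The plan is to leverage Proposition~\ref{prop:1} directly rather than to re-derive any contraction estimate from scratch. Define $w(t) := x_1(t)-x_2(t)$ and $V(t) := |w(t)| = |x_1(t)-x_2(t)|$. Since $F(t,x)$ is continuous in $(t,x)$, the solutions $x_1(t)$ and $x_2(t)$ are continuously differentiable in $t$, and hence so is $w(t)$. The hypothesis $w(t_0)\notin S$ means precisely that $|\cdot|$ is differentiable at the point $w(t_0)$. First I would invoke the chain rule---valid because the outer map $|\cdot|$ is differentiable at $w(t_0)$ and the inner curve $w(\cdot)$ is differentiable at $t_0$---to conclude that $V(t)$ is differentiable at $t=t_0$, with $\frac{dV}{dt}\big|_{t=t_0} = \frac{\partial}{\partial z}|z|\big|_{z=w(t_0)}\cdot\dot{w}(t_0)$. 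The point of this step is only that the two-sided derivative of $V$ exists at $t_0$; its value need not be computed explicitly.

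Next I would use the finite-time estimate supplied by Proposition~\ref{prop:1}. Applying \eqref{eq:9} with initial time $t_0$ gives $V(t)\le e^{c(t-t_0)}V(t_0)$ for every $t\ge t_0$, with equality at $t=t_0$. Forming the forward difference quotient for $h>0$ yields $\frac{V(t_0+h)-V(t_0)}{h}\le \frac{e^{ch}-1}{h}\,V(t_0)$, and letting $h\to 0^+$ bounds the right-hand derivative of $V$ at $t_0$ by $c\,V(t_0)$. Because $V$ is differentiable at $t_0$ by the previous step, this right-hand derivative coincides with $\frac{dV}{dt}\big|_{t=t_0}$, so $\frac{d}{dt}(|x_1(t)-x_2(t)|)\big|_{t=t_0}\le c\,|x_1(t_0)-x_2(t_0)|$, which is exactly \eqref{eq:15}.

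The subtle point, and the one I would be most careful about, is that Proposition~\ref{prop:1} delivers the comparison $V(t)\le e^{c(t-t_0)}V(t_0)$ only for $t\ge t_0$, so on its own it controls merely the forward (right-hand) Dini derivative of $V$. The role of the differentiability hypothesis $w(t_0)\notin S$ is precisely to upgrade this one-sided bound to a bound on the genuine two-sided derivative: once we know $\frac{dV}{dt}\big|_{t=t_0}$ exists, it must equal the right-hand derivative and hence inherit the bound. I would also flag convexity and positive invariance of $\mathcal{X}$ as the standing hypotheses, inherited from Proposition~\ref{prop:1}, that make the trajectory estimate \eqref{eq:9} available in the first place. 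An alternative, more computational route would differentiate $V$ directly as $\frac{\partial}{\partial z}|z|\big|_{z=w(t_0)}\cdot(F(t_0,x_1)-F(t_0,x_2))$, write the difference as $\left(\int_0^1 \frac{\partial F}{\partial x}(t_0,x_2+s\,w(t_0))\,ds\right)w(t_0)$ using convexity of $\mathcal{X}$, and apply the standard pointwise inequality relating the matrix measure to the logarithmic directional derivative of the norm; but this reproves content already packaged inside Proposition~\ref{prop:1}, so I would prefer the shorter argument above.
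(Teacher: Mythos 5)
Your proof is correct, and it rests on the same two pillars as the paper's: the finite-time estimate \eqref{eq:9} from Proposition~\ref{prop:1} applied from initial time $t_0$, and the fact that $x_1(t_0)-x_2(t_0)\notin S$ makes $t\mapsto|x_1(t)-x_2(t)|$ genuinely (two-sidedly) differentiable at $t_0$ via the chain rule. Where you differ is in the final deduction. The paper argues by contradiction: it sets $a(t)=|x_1(t)-x_2(t)|$ and $b(t)=e^{c(t-t_0)}|x_1(t_0)-x_2(t_0)|$, supposes $\dot a(t_0)>\dot b(t_0)$, propagates this strict inequality to a small interval $[t_0,t_0+\tau]$ using continuity of the derivatives, and invokes the mean value theorem to force $a(t_0+\tau)>b(t_0+\tau)$, contradicting Proposition~\ref{prop:1}. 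You instead bound the forward difference quotient directly, $\frac{V(t_0+h)-V(t_0)}{h}\le\frac{e^{ch}-1}{h}V(t_0)$, let $h\to0^+$, and identify the resulting right-hand derivative with the two-sided one. Your route is slightly more economical: it needs differentiability of $V$ only at the single instant $t_0$, whereas the paper's argument tacitly uses that $a(t)$ is \emph{continuously} differentiable on a neighborhood of $t_0$ (i.e., that $x_1(t)-x_2(t)$ stays off $S$ for $t$ near $t_0$ and that $d(\cdot)$ varies continuously there), a point it does not justify and which, for a general norm, is not entirely automatic. Your closing remark correctly isolates the role of the hypothesis $x_1(t_0)-x_2(t_0)\notin S$ as upgrading a one-sided Dini bound to a bound on the true derivative, which is exactly the content of the corollary.
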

\begin{proof}
  Let $a(t)=|x_1(t)-x_2(t)|$ and let $b(t)=e^{c(t-t_0)} |x_1(t_0)-x_2(t_0)|$  so that $a(t_0)=b(t_0)$. Suppose \eqref{eq:15} did not hold, \emph{i.e.}, $\dot{a}(t_0)> \dot{b}(t_0)$. The functions $a(t)$ and $b(t)$ are continuously differentiable for $t$ near $t_0$, and therefore there exists $\tau>0$ such that $\dot{a}(t) >\dot{b}(t)$ for all $t\in[t_0,t_0+\tau]$. By the mean-value theorem, there exists $t'\in[t_0,t_0+\tau]$ such that $\dot{a}(t')-\dot{b}(t')=\frac{1}{\tau}({a(t_0+\tau)-b(t_0+\tau)-a(t_0)+b(t_0)})=\frac{1}{\tau}(a(t_0+\tau)-b(t_0+\tau))$. Since $\dot{a}(t')>\dot{b}(t')$, we then have that $a(t_0+\tau)>b(t_0+\tau)$. But Proposition \ref{prop:1} guarantees $a(t)\leq b(t)$ for all $t\geq t_0$, a contradiction.
\end{proof}

\begin{corollary}
\label{cor:2}
  Under the hypotheses of Proposition \ref{prop:1}, let $S\subset \mathbb{R}^n$ be the measure-zero set where $|\cdot|$ is not differentiable, and for $z\not\in S$, define $d(z):=\frac{\partial}{\partial z}|z|$. For all $x_1,x_2\in\mathcal{X}$ and all $t\geq 0$,
  \begin{align}
    \label{eq:11}
\limsup_{(z_1,z_2)\to (x_1,x_2), z_1-z_2\not\in S}d(z_1-z_2)(F(t,x_1)-F(t,x_2))\leq c |x_1-x_2|.
  \end{align}

\end{corollary}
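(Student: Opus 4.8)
The plan is to reduce \eqref{eq:11} to the pointwise infinitesimal-contraction estimate already contained in Corollary \ref{cor:1}, and then to absorb the discrepancy between the frozen direction $F(t,x_1)-F(t,x_2)$ and the ``live'' direction $F(t,z_1)-F(t,z_2)$ through a continuity argument. I first observe that the left-hand side of \eqref{eq:11} is precisely the Clarke derivative of $G(x_1,x_2):=|x_1-x_2|$ at $(x_1,x_2)$ in the direction $(F(t,x_1),F(t,x_2))$, via the identity \eqref{eq:18}, since for $z_1-z_2\notin S$ the gradient of $G$ is $(d(z_1-z_2),-d(z_1-z_2))$.

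Next I extract a pointwise bound from Corollary \ref{cor:1}. Given any $z_1,z_2\in\mathcal{X}$ with $z_1-z_2\notin S$ and any $t\geq0$, I regard $z_1,z_2$ as initial data at time $t$ for two solutions of \eqref{eq:4}; such solutions exist locally because $F$ is continuously differentiable in $x$, and they remain in $\mathcal{X}$ by positive invariance. Because $|\cdot|$ is differentiable at $z_1-z_2$, the chain rule evaluates the left side of \eqref{eq:15} as $d(z_1-z_2)(F(t,z_1)-F(t,z_2))$, so \eqref{eq:15} yields
\begin{align*}
d(z_1-z_2)\big(F(t,z_1)-F(t,z_2)\big)\leq c\,|z_1-z_2|.
\end{align*}
I then split the argument of the $\limsup$ in \eqref{eq:11} as
\begin{align*}
d(z_1-z_2)\big(F(t,x_1)-F(t,x_2)\big)
&= d(z_1-z_2)\big(F(t,z_1)-F(t,z_2)\big)\\
&\quad+ d(z_1-z_2)\Big[\big(F(t,x_1)-F(t,x_2)\big)-\big(F(t,z_1)-F(t,z_2)\big)\Big].
\end{align*}
The first term is at most $c\,|z_1-z_2|$ by the displayed bound. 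For the second term I use that the gradient of a norm has unit dual norm, $\Vert d(z_1-z_2)\Vert_*=1$ (a consequence of $1$-homogeneity together with the duality $|v|=\max_{\Vert\xi\Vert_*\leq1}\xi^\top v$), so the duality inequality $\xi^\top v\leq\Vert\xi\Vert_*\,|v|$ bounds it by the $|\cdot|$-size of the bracketed difference, which tends to $0$ as $(z_1,z_2)\to(x_1,x_2)$ by continuity of $F(t,\cdot)$. Passing to the $\limsup$ gives $\limsup\leq c\,|x_1-x_2|+0$, which is \eqref{eq:11}.

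The step I expect to require the most care is the passage from Corollary \ref{cor:1}, which is phrased along existing trajectories, to the genuinely pointwise inequality above: this hinges on invoking local existence of a solution through each admissible initial pair $(z_1,z_2)$ and on the chain-rule evaluation of $\frac{d}{dt}|x_1(t)-x_2(t)|$, which is legitimate exactly because $z_1-z_2\notin S$. The continuity estimate for the residual term is routine once the identity $\Vert d(\cdot)\Vert_*=1$ is in hand; the only mild subtlety there is the degenerate case $x_1=x_2$, where $F(t,x_1)-F(t,x_2)=0$ makes the $\limsup$ trivially $0=c\,|x_1-x_2|$.
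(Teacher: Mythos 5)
Your proof is correct and follows essentially the same route as the paper: both reduce \eqref{eq:11} to Corollary \ref{cor:1} applied to solutions launched from the perturbed points $(z_1,z_2)$ at time $t$, and then use continuity of $F(t,\cdot)$ and of the norm (together with boundedness of $d(\cdot)$) to pass to the limit. The only difference is presentational --- you argue directly via a decomposition and the unit-dual-norm bound $\Vert d(z_1-z_2)\Vert_*=1$, whereas the paper argues by contradiction with extracted sequences and leaves that bound implicit.
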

\begin{proof}
  Suppose not, and let $(z_1^i)_{i=1}^\infty$ and $(z_2^i)_{i=1}^\infty$ be sequences converging to some $x_1$ and $x_2$, respectively, such that $z_1^i-z_2^i\not \in S$ for all $i$ and there exists $\epsilon_1>0$ such that
  \begin{align}
    \label{eq:12}
    \lim_{i\to \infty} d(z_1^i-z_2^i)(F(t_0,x_1)-F(t_0,x_2)) \geq c |x_1-x_2|+\epsilon_1
  \end{align}
for some $t_0\geq 0$. 

Because $F(t_0,\cdot )$ and $|\cdot|$ are continuous functions and $(z_1^i,z_2^i)\to  (x_1,x_2)$, for any $\epsilon_2\in (0,\epsilon_1)$, there exists sufficiently large $i^*$ such that
\begin{align}
  \label{eq:16}
  d(z_1^{i^*}-z_2^{i^*})(F(t_0,z_1^{i^*})-F(t_0,z_2^{i^*})) \geq c |z_1^{i^*}-z_2^{i^*}|+\epsilon_2.
\end{align}

Now let $z_1(t)$ and $z_2(t)$ denote solutions to \eqref{eq:4} with $z_1(t_0)=z_1^{i^*}$, $z_2(t_0)=z_2^{i^*}$. Then the left-hand side of \eqref{eq:16} is equal to  $\frac{d}{dt}(|z_1(t)-z_2(t)|) \big |_{t=t_0}$. However, Corollary \ref{cor:1} guarantees $\frac{d}{dt}(|z_1(t)-z_2(t)|) \big |_{t=0}\leq c |z_1(t_0)-z_2(t_0)|= c|z_1^{i^*}-z_2^{i^*}|$, a contradiction.
\end{proof}

\bibliographystyle{ieeetr}
\bibliography{biblio.bib}

\begin{thebibliography}{10}

\bibitem{LOHMILLER:1998bf}
W.~Lohmiller and J.-J.~E. Slotine, ``On contraction analysis for non-linear
  systems,'' {\em Automatica}, vol.~34, no.~6, pp.~683--696, 1998.

\bibitem{Sontag:2010fk}
E.~D. Sontag, ``Contractive systems with inputs,'' in {\em Perspectives in
  Mathematical System Theory, Control, and Signal Processing}, pp.~217--228,
  Springer, 2010.

\bibitem{Forni:2012qe}
F.~Forni and R.~Sepulchre, ``A differential {L}yapunov framework for
  contraction analysis,'' {\em IEEE Transactions on Automatic Control},
  vol.~59, pp.~614--628, March 2014.

\bibitem{Aghannan:2003wj}
N.~Aghannan and P.~Rouchon, ``An intrinsic observer for a class of {L}agrangian
  systems,'' {\em IEEE Transactions on Automatic Control}, vol.~48, no.~6,
  pp.~936--945, 2003.

\bibitem{Bonnabel:2010fx}
S.~Bonnabel, ``A simple intrinsic reduced-observer for geodesic flow $ $,''
  {\em IEEE Transactions on Automatic Control}, vol.~55, no.~9, pp.~2186--2191,
  2010.

\bibitem{Sanfelice:2012uf}
R.~G. Sanfelice and L.~Praly, ``Convergence of nonlinear observers on
  $\mathbb{R}^n$ with a {R}iemannian metric (part i),'' {\em IEEE Transactions
  on Automatic Control}, vol.~57, no.~7, pp.~1709--1722, 2012.

\bibitem{Dani:2015ng}
A.~P. Dani, S.-J. Chung, and S.~Hutchinson, ``Observer design for stochastic
  nonlinear systems via contraction-based incremental stability,'' {\em IEEE
  Transactions on Automatic Control}, vol.~60, no.~3, pp.~700--714, 2015.

\bibitem{Vidyasagar:2002ly}
M.~Vidyasagar, {\em Nonlinear System Analysis}.
\newblock Society for Industrial and Applied Mathematics, 2002.

\bibitem{Desoer:2008bh}
C.~Desoer and M.~Vidyasagar, {\em Feedback systems: Input-output properties}.
\newblock Society for Industrial and Applied Mathematics, 2008.

\bibitem{Clarke:1990kx}
F.~H. Clarke, {\em Optimization and nonsmooth analysis}, vol.~5.
\newblock {SIAM}, 1990.

\bibitem{khalil}
H.~K. Khalil, {\em Nonlinear Systems}.
\newblock Prentice Hall, third~ed., 2002.

\bibitem{Coogan:2016rp}
S.~Coogan, M.~Arcak, and A.~A. Kurzhanskiy, ``Mixed monotonicity of partial
  first-in-first-out traffic flow models,'' in {\em IEEE Conference on Decision
  and Control}, pp.~7611--7616, 2016.

\bibitem{Kurzhanskiy:2015fj}
A.~A. Kurzhanskiy and P.~Varaiya, ``Traffic management: An outlook,'' {\em
  Economics of Transportation}, vol.~4, no.~3, pp.~135--146, 2015.

\bibitem{Lovisari:2014qv}
E.~Lovisari, G.~Como, A.~Rantzer, and K.~Savla, ``Stability analysis and
  control synthesis for dynamical transportation networks,'' {\em
  arXiv:1410.5956}, 2014.

\bibitem{Scholtes:2012fk}
S.~Scholtes, {\em Introduction to piecewise differentiable equations}.
\newblock Springer, 2012.

\end{thebibliography}

\end{document}